\newcommand{\PP}{{\ensuremath{\mathcal{P}}}}
\newcommand{\IGNOREME}[1]{}
\numberwithin{thm}{section}
\DeclareSymbolFont{AMSb}{U}{msb}{m}{n}
\newcommand{\N}{{\mathbb{N}}}
\newcommand{\R}{{\mathbb{R}}}
\def\now{\minute=\time \hour=\time \divide \hour by 60 \hourMins=\hour \multiply\hourMins by 60
  \advance\minute by -\hourMins \zeroPadTwo{\the\hour}:\zeroPadTwo{\the\minute}}
\def\today{\the\year-\zeroPadTwo{\the\month}-\zeroPadTwo{\the\day}}
\def\zeroPadTwo#1{\ifnum #1<10 0\fi #1}
\DeclareRobustCommand{\qed}{\ifmmode\mathqed\else\leavevmode\unskip\penalty9999\hbox{}\nobreak\hfill\quad\hbox{\qedsymbol}\fi}
\let\QED@stack\@empty
\let\qed@elt\relax
\newcommand{\pushQED}[1]{\toks@{\qed@elt{#1}}\@temptokena\expandafter{\QED@stack}\xdef\QED@stack{\the\toks@\the\@temptokena}}
\newcommand{\popQED}{\begingroup\let\qed@elt\popQED@elt \QED@stack\relax\relax\endgroup}
\def\popQED@elt#1#2\relax{#1\gdef\QED@stack{#2}}
\newcommand{\qedhere}{\begingroup \let\mathqed\math@qedhere\let\qed@elt\setQED@elt \QED@stack\relax\relax \endgroup}
\newif\ifmeasuring@
\newif\iffirstchoice@ \firstchoice@true
\def\setQED@elt#1#2\relax{\ifmeasuring@\else \iffirstchoice@ \gdef\QED@stack{\qed@elt{}#2}\fi\fi#1}
\newcommand{\mathqed}{\quad\hbox{\qedsymbol}}
\def\linebox@qed{\hfil\hbox{\qedsymbol}\hfilneg}
\def\math@qedhere{\@ifundefined{\@currenvir @qed}{\qed@warning\quad\hbox{\qedsymbol}}{\@xp\aftergroup\csname\@currenvir @qed\endcsname}}
\def\displaymath@qed{\relax\ifmmode\ifinner\aftergroup\linebox@qed\else\eqno\let\eqno\relax \let\leqno\relax \let\veqno\relax\hbox{\qedsymbol}\fi\else\aftergroup\linebox@qed\fi}
\let\csname equation*@qed\endcsname\displaymath@qed
\def\equation@qed{
  \iftagsleft@\hbox{\phantom{\quad\qedsymbol}}\gdef\alt@tag{\rlap{\hbox to\displaywidth{\hfil\qedsymbol}}\global\let\alt@tag\@empty}
  \else\gdef\alt@tag{\global\let\alt@tag\@empty\vtop{\ialign{\hfil####\cr\tagform@\theequation\cr\qedsymbol\cr}}\setbox\z@}
  \fi
}
\def\qed@tag{\global\tag@true \nonumber&\omit\setboxz@h {\strut@ \qedsymbol}\tagsleft@false\place@tag@gather\kern-\tabskip\ifst@rred \else \global\@eqnswtrue \fi \global\advance\row@\@ne \cr}
\def\split@qed{\def\endsplit{\crcr\egroup \egroup \ctagsplit@false \rendsplit@\aftergroup\align@qed}}
\def\align@qed{\ifmeasuring@ \tag*{\qedsymbol}\else \let\math@cr@@@\qed@tag\fi}
\let\csname align*@qed\endcsname\align@qed
\let\csname gather*@qed\endcsname\align@qed
\def\@tempb#1 v#2.#3\@nil{#2}
\def\@tempa{TT}\else\def\@tempa{TF}\fi
\@tempa\renewcommand{\math@qedhere}{\quad\hbox{\qedsymbol}}\fi
\newcommand{\openbox}{\leavevmode\hbox to.77778em{\hfil\vrule\vbox to.675em{\hrule width.6em\vfil\hrule}\vrule\hfil}}
\DeclareRobustCommand{\textsquare}{\begingroup\usefont{U}{msa}{m}{n}\thr@@\endgroup}
\providecommand{\qedsymbol}{\openbox}
\renewenvironment{proof}[1][\proofname]{\par\pushQED{\qed}\normalfont\topsep6\p@\@plus6\p@\relax\trivlist\item[\hskip\labelsep\itshape #1\@addpunct{.}]\ignorespaces}{\popQED\endtrivlist\@endpefalse}
\providecommand{\proofname}{Proof}
\title{Randomized Rounding for Routing and Covering Problems: Experiments and Improvements\thanks{Supported by the German Science Foundation (DFG) via its priority program (SPP) 1307 ``Algorithm Engineering'', grant DO 749/4-2.}}
\author{Benjamin Doerr\inst{1},
        Marvin K\"unnemann\inst{2}, \and
        Magnus Wahlstr\"om\inst{1}}
\institute{
    Max-Planck-Institut f\"ur Informatik, Saarbr\"ucken, Germany
    \and
    Universit\"at des Saarlandes, Saarbr\"ucken, Germany}
\begin{document}

\maketitle


\begin{abstract} 
  Following previous theoretical work by Srinivasan (FOCS 2001) and the first author (STACS 2006) and a first experimental evaluation on random instances (ALENEX 2009), we investigate how the recently developed different approaches to generate randomized roundings satisfying disjoint cardinality constraints behave when used in two classical algorithmic problems, namely low-congestion routing in networks and max-coverage problems in hypergraphs.  
  
  We generally find that all randomized rounding algorithms work well, much better than what is guaranteed by existing theoretical work.  The derandomized versions produce again significantly better rounding errors, with running times still negligible compared to the one for solving the corresponding LP. It thus seems worth preferring them over the randomized variants.
  
  The data created in these experiments lets us propose and investigate the following new ideas. For the low-congestion routing problems, we suggest to solve a second LP, which yields the same congestion, but aims at producing a solution that is easier to round. Experiments show that this reduces the rounding errors considerably, both in combination with randomized and derandomized rounding.
  
For the max-coverage instances, we generally observe that the greedy heuristics also performs very good. We develop a strengthened method of derandomized rounding, and a simple greedy/rounding hybrid approach using greedy and LP-based rounding elements, and observe that both these improvements yield again better solutions than both earlier approaches on their own. 

  
  For an important special case of max-coverage, namely unit disk max-domination, we also develop a PTAS. Contrary to all other algorithms investigated, it performs not much better in experiments than in theory. In consequence, unless extremely good solutions are to be obtained with huge computational resources, greedy, LP-based rounding or hybrid approaches are preferable.
\end{abstract}


\section{Introduction}

Randomized rounding  is one of the core primitives in randomized algorithmics. In contrast to many deep theoretical results, only very little experimental knowledge exists, and almost no fine-tuning and other implementation advice exists. Such results became even more interesting, since in the last ten years two substantially different methods~\cite{sriniround,gandhifocs02,gandhijacm,ichstacs06} extending the classical approach of Raghavan and Thompson~\cite{raghthom,ragh} were developed.

The only experimental work on either classical randomized rounding or the new approaches seems to be~\cite{DW09}. It compares the different methods on randomly generated rounding problems. The purpose of this work is to extend these results to two less artificial problem classes, namely routing and covering problems. These problems are among the first ones for which randomized rounding has been proven (by theoretical means) to lead to good algorithms.

\noindent{\bf Randomized Rounding: }
Given an arbitrary real number $x$, we say that (the random variable) $y$ is a randomized rounding of $x$, if $y$ equals $\lfloor x\rfloor+1$ with probability $\{x\} := x - \lfloor x \rfloor$ and $\lfloor x \rfloor$ otherwise. In simple words, the closer $x$ is to the next larger integer, the higher the chance of being rounded up. 

Randomized rounding builds on the simple observation that this keeps the expectation unchanged, that is, $E(y) = x$. This naturally extends to linear expressions. If $y_1, \ldots, y_n$ are randomized roundings of $x_1, \ldots, x_n$ and $f : \R^n \to \R$ is a linear function, then $E(f(y_1, \ldots, y_n)) = f(x_1, \ldots, x_n)$. If, in addition, the $y_i$ are independent, then Chernoff bounds allow strong quantitative statements showing that with high probability, $f(y_1, \ldots, y_n)$ is not far from its expectation. These two key facts allow to use randomized rounding in connection with integer linear programming. Two examples of this are given in the following sections. 

The new aspect of the works~\cite{sriniround,gandhifocs02,gandhijacm,ichstacs06} is that they allow to generate randomized roundings that satisfy certain cardinality constraints with probability one. That is, for certain sets $I$, we can prescribe that $\sum_{i \in I} y_i = \sum_{i \in I} x_i$, provided the right-hand side is integral. This can be done without giving in with the other properties---both methods generate randomized roundings that admit the same Chernoff bounds as independent randomized rounding. 

For reasons of space, we cannot describe these rounding algorithms here. However, since in this paper we are mainly comparing them experimentally, the reader may treat them as black-box, keeping in mind only that they generate randomized roundings that look independent, but satisfy cardinality constraints. 

We shall concentrate on disjoint cardinality constraints. This is the most common form of cardinality constraints. Also, the comparison of the methods is more interesting here, since all have the same time complexity. 

\noindent{\bf Our Results: }
The aim of this work is to find out how well the different rounding approaches are suited to solve classical problems that are often attacked with LP-based methods, but also to try to find fine-tunings and alternative approaches. 

As underlying problems we chose the classical low-congestion routing problem and the max-coverage problem. They are different in flavor since in the first, randomized rounding is used with a focus of exploiting Chernoff bounds in linear constraints and objective function. In the second, since the right-hand side of the inequalities is one, we cannot do so, but resort to accepting that a certain fraction of the vertices covered in the relaxation are not covered after rounding.

All our results indicate that generally the derandomized algorithms yield superior results. The increase in running time over the randomized versions usually is still negligible compared to the complexity of solving the LPs involved.

For the low-congestion routing problem, we regard routing requests placed randomly on a two-dimensional grid. We regard instances small enough to compute the optimum solutions via solving an integer linear program. We observe that randomized rounding with cardinality constraints obtains reasonably good solutions. Surprisingly, unlike in previous experiments~\cite{DW09}, we observe that the bit-wise randomized approach of~\cite{ichstacs06} produces better results than the tree-based one of~\cite{sriniround}.

The gap to the optimum is roughly halved if we use a derandomization of randomized rounding. Here, the derandomization of~\cite{sriniround} obtained in~\cite{DW09} proved to be superior.

In an attempt to fine-tune randomized rounding, we propose solving a second LP which gives the same congestion in the relaxation, but aims at making the solution easier to round. While this naturally does not give improved theoretical guarantees, it yields a good reduction of the rounding errors, in particular, in combination with the derandomization. This seems to be a fruitful approach whenever the additional cost of solving a second LP is admissible.

Our analysis of maximum coverage shows that both randomized rounding and the greedy algorithm produce good results in general. However, for both there are instances showing the other behave much better. Analyzing the data produced in our experiments, we consider two paths to a hybrid approach.  One way is to strengthen the derandomization to include a greedy component in variable selection, as a gradient-based rounding; the other, complementary, is to spend part of the budget greedily, and solve the remaining instance via an LP- and randomized rounding-approach.  Both hybrids perform better than either of the two plain approaches alone; the gradient-based rounding performs particularly well.

For a natural planar Euclidian version of the problem, we also give a PTAS. However, unlike for all other approaches used in this paper, the experimental results are not much better than the theoretical guarantees. In consequence, this is an alternative useful only if very good approximations are needed and if computation power is available plentiful.

\section{Randomized Rounding for Low-Congestion Routing}
\label{sec:routing}

\subsection{The Low-Congestion Routing Problem in Networks}

The low-congestion routing problem is one of the classical applications of randomized rounding~\cite{raghthom}. In its simplest version, the objective is to route a number of requests through a given network, minimizing the maximum usage of an edge (``congestion''). Problems of this type found all kinds of applications, an early one being routing wires in gate arrays~\cite{karprouting}.

We shall regard the following basic variant, previously regarded also in~\cite{raghthom, ragh, kleinbergthesis, sriniround, gandhijacm}. Given is a (directed) network $G=(V,E)$, together with $k$ routing requests. Each consists of a source vertex $s_i$, a target vertex $t_i$ and a demand $r_i$. The objective is to find, for each $i \in [k] := \{1, \dots, k\}$, a flow from $s_i$ to $t_i$ having flow value $r_i \in \N$, such that the congestion, that is, the maximum total flow over an edge, is minimized. This problem is easily formulated as integer linear program (ILP): We minimize the congestion $C$ subject to the constraints 
\begin{align} 
  &\forall e \in E: \sum_{i = 1}^k x_{ie} \le C \label{ilpcongestion}\\
  &\forall i \in [k]: \sum_{e = (s_i,v) \in E} x_{ie} - \sum_{e = (v,s_i) \in E} x_{ie} = r_i \label{ilpcard}\\
  &\forall i \in [k] \, \forall v \in V \setminus \{s_i,t_i\}: \sum_{e = (w,v) \in E} x_{ie} = \sum_{e = (v,w) \in E} x_{ie}\\
  &\forall i \in [k] \, \forall e \in E: x_{ie} \in \{0, 1\}.\label{ilplast}
\end{align}
We should add that~\cite{raghthom} only regard the special case of all demands $r_i$ being one, since randomized rounding respecting cardinality constraints with right-hand side greater than one was not available at that time. For an application with particular need for larger $r_i$, see the failure restoration problem in optical networks described in~\cite{gandhijacm}. Also, we should add that other authors in addition have edge capacities $c_e$ and then minimize the relative congestion, but it is easily seen that this just replaces the $C$ in the first type of constraints by $c_e C$. 

Since already the case of unit demands is NP-complete, optimal solutions seem difficult to obtain. The common solution concept is to (i) solve the linear relaxation of the ILP and obtain a fractionally optimal solution $(x^*,C^*)$; (ii) use \emph{path stripping} to decompose each flow $f_i$ encoded in $x^*_{\cdot}$ into a weighted sum $f_i = \sum_{P \in \PP_i} y^*_{iP} f_P$, where $\PP_i$ is a finite set of $s_i$--$t_i$ paths, for each $P \in \PP_i$, $f_P$ is the flow that has exactly one unit on each edge of $P$, and  $y^*_{iP} \in [0,1]$---note that all this implies $\sum_{P \in \PP_i} y^*_{iP} = r_i$; (iii) use randomized rounding to round all $y^*_{iP}$ to $y_{iP} \in \{0,1\}$ in such a way that the cardinality constraints $\sum_{P \in \PP_i} y_{iP} = r_i$ are maintained. Now $\sum_{P \in \PP_i} y_{iP} f_P$ is a flow from $s_i$ to $t_i$ with flow value $r_i$. These flows form a solution having a congestion of $C = \max_{e \in E} \sum_{i = 1}^k \sum_{P \in \PP_i, e \in P} y_{iP}$. Large deviation bounds show that this congestion is not far from the value $C^*$ given by the relaxation~\cite{raghthom,kleinbergthesis}:
\begin{align*}
	&C = O\bigg(\frac{\log m}{\log(2 \log m/C^*)}\bigg), \mbox{ if } C^* \le \log m;\\
	&C = C^* + O(\sqrt{C^* \log m}), \mbox{ if } C^* > \log m.
\end{align*}
Recall that $C^*$ is a lower bound for the optimal solution. Hence if $C^*$ is not too small compared to $m$, then this approach gives very good approximation factors.

\subsection{Algorithms Used}

To approximately or exactly solve our test instances, we used the following algorithms. Whenever running times permitted, we used the exact ILP-Solver ILOG CPLEX 11.0 to directly solve the ILP given by (\ref{ilpcongestion}) to (\ref{ilplast}). All other approaches involve solving the linear relaxation of the ILP (for which again we used CPLEX) and then different rounding methods. 

Since the ILP contains hard cardinality constraints, we cannot use the classical independent approach of Raghavan and Thompson (we did so, though, ignoring the cardinality constraints, to see if the cardinality constraints make rounding more difficult). There are two approaches to generate randomized roundings respecting cardinality constraints due to Srinivasan~\cite{sriniround} and the first author~\cite{ichstacs06}. Both can be derandomized~\cite{DW09,ichstacs06}, so that in total we have four rounding methods available. See the original papers or~\cite{DW09} for a more detailed discussion of these methods. All algorithms different from CPLEX were implemented in C/C++. 

\subsection{Experimental Set-up}

To analyze the questions discussed in the introduction, we regarded the following type of instances. Motivated by the fact that many routing problems have a two-dimensional flavor (e.g., the wire routing problem of~\cite{karprouting}), we chose a finite two-dimensional bi-directed grid. Note that this simple graph is far from trivial for routing, see, e.g., the thrilling one-turn routing conjecture in~\cite{karprouting}, which is, to the best of our knowledge still open.
 
We choose routing requests randomly as follows. Both $s_i$ and $t_i$ are chosen uniformly at random from $V$. To reduce otherwise the influence of randomness, we choose all demands in our main set of experiments as $r_i = 3$; in a second set of experiments we pick each~$r_i$ uniformly at random from~$\{1,\dots,5\}$.
We also tried placing the $s_i, t_i$ uniformly at random on the outer border of the grid, but saw no significant differences.

The size $n$ of the grid and the number of demands $k$ was varied to create different instance sizes and densities. All numerical values reported are the averages over at least $100$ runs. The times were measured on AMD dual processor 2.4 GHz Opteron machines.

\subsection{Analysis}

A summary of our results is presented in Tables~\ref{tab:congestion} and~\ref{tab:congestion_randdemand}. For the grid sizes $5 \times 5$, $10 \times 10$ and $15 \times 15$ together with demands ranging from $10$ to $75$, we state the average values of the congestion of an optimal solution (line 1 of each table) and the amounts by which a solution computed by one of the four randomized rounding approaches (lines 2, 3, 5, and 6) is worse (in percent). Lines 4 and 7 of the tables refer to an improvement discussed in the subsequent subsection. 

Particularly for instances that do show some congestion, we see that randomized rounding yields quite good solutions, much better than what the theoretical bounds would predict. Derandomization is clearly worth the small extra effort (cf. the running times in Table~\ref{tab:times}), reducing the gap to the optimum by roughly a half. 
The results with randomly chosen demands in Table~\ref{tab:congestion_randdemand} are qualitatively similar to those with uniform demands in Table~\ref{tab:congestion}, 
with perhaps a generally somewhat larger advantage for the derandomized algorithms.

Comparing the different methods, surprisingly, our experiments generally show that the bit-wise randomized rounding approach of~\cite{ichstacs06} (line 3) produced slightly better rounding errors than the tree-based one of~\cite{sriniround} (line 2). We do not understand this phenomenon currently. Among the derandomizations, as expected and similarly as for random instances~\cite{DW09}, the derandomization of the tree-based approach of~\cite{sriniround} given in~\cite{DW09} is superior to the derandomization of the bit-wise one in~\cite{ichstacs06}. This is due to the iterative nature of the latter, see again~\cite{DW09}.

We also used classical independent randomized rounding. Clearly, this does not produce feasible solutions in most cases. However, even ignoring this issue, we also observed that we typically have slightly larger congestions (e.g. in the sparse instance of $10$ demands in a $15\times 15$ grid, independent rounding lead to a congestion of $3.19$ compared to congestions of $2.80$ and $2.85$ for the randomized approaches of~\cite{ichstacs06} and~\cite{sriniround}).

Running times consumed by the different stages are mainly given in Table~\ref{tab:times}. All randomized rounding stages for each instance took less than 0.02 seconds. Less than a tenth of this is the time needed for the path-stripping in each instance. Hence these numbers are not given in the table. From the table, we see that the bit-wise derandomization takes about 20 times longer than the tree-based one, but both numbers are greatly dominated by the times for solving the LP (ignore the ``Heur.'' line for the moment). 

\begin{table}[t]%
\small
\centering
\subtable{
\begin{tabular}{l||r|r|r|r}
  $5 \times 5$ & 10 & 25 & 50 & 75 \\ \hline 
 Optimum & 3.37 & 6.21 & 10.98 & 14.76 \\
 RR \cite{sriniround}   & +9.23\% & +8.70\% & +7.29\% & +5.96\% \\
 RR \cite{ichstacs06}   & +7.13\% & +5.96\% & +5.01\% & +4.13\% \\
 RR+   &+6.35\% & +4.99\% & +4.28\% & +2.64\% \\
 DeRR \cite{ichstacs06} &+3.77\% & +3.54\% & +2.37\%& +1.90\% \\
 DeRR \cite{DW09} &+2.76\% & +1.93\% & +0.82\% & +0.88\% \\
 DeRR+ &+1.19\% & +1.13\% & +0.64\% & +0.47\% 
\end{tabular}
}
\subtable{
\begin{tabular}{l||r|r|r|r}
 $10 \times 10$  & 10 & 25 & 50 & 75 \\ \hline
 Optimum & 2.19 & 3.41 & 5.59 & 7.76 \\
 RR \cite{sriniround}   & +32.17\% & +39.88\% & +31.13\% & +24.48\% \\
 RR \cite{ichstacs06}   & +30.43\% & +36.07\% & +27.01\% & +19.07\% \\
 RR+   & +31.85\% & +24.63\% & +18.60\% & +12.50\% \\
 DeRR \cite{ichstacs06} & +22.33\% & +25.81\% & +15.56\% & +10.95\% \\
 DeRR \cite{DW09} & +16.38\% & +22.58\% & +11.63\% & +8.38\% \\
 DeRR+ &  +11.81\% & +10.85\% & +7.87\% &  +4.25\%
\end{tabular}
\label{tab:congestion10}
}
\subtable{
\begin{tabular}{l||r|r|r|r}
  $15 \times 15$ & 10 & 25 & 50 & 75 \\ \hline
 Optimum & 1.98 & 2.73 & 4.11 & 5.31 \\
 RR \cite{sriniround}   & +42.29\% & +58.97\% & +52.07\% & +45.70\% \\
 RR \cite{ichstacs06}   & +40.31\% & +59.34\% & +46.94\% & +41.40\% \\
 RR+   & +40.26\% & +45.79\% & +36.98\% & +28.49\% \\
 DeRR \cite{ichstacs06} & +23.42\% & +38.10\% & +31.63\% & +25.81\% \\
 DeRR \cite{DW09} & +13.76\% & +27.84\% &+21.90\% & +18.28\% \\
 DeRR+ & +16.59\% & +17.22\% & +18.49\% &+12.90\%
\end{tabular}
}
\normalsize
\caption{Congestions achieved by the 7 different approaches for grid sizes $5 \times 5$, $10 \times 10$ and $15 \times 15$. All demands are chosen as $r_i = 3$. The optimum was computed by solving the IP via CPLEX (not feasible for larger instances). For the other algorithms we state the relative increase of the congestion over the optimum.}
\label{tab:congestion}
\end{table}

\begin{table}
\small
\centering
\subtable{
\begin{tabular}{l||r|r|r|r}
  $5 \times 5$ & 10 & 25 & 50 & 75 \\ \hline
 Optimum & 3.79 & 6.65 & 11.08 & 15.08\\
 RR \cite{sriniround}   & +12.66\% & +9.47\% & +8.03\% & +6.03\% \\
 RR \cite{ichstacs06}   & +8.71\% & +6.92\% & +5.23\% & +4.51\% \\
 RR+   & +6.60\%& +4.36\% & +3.34\% & +2.98\% \\
 DeRR \cite{ichstacs06} & +2.64\%& +2.86\% & +2.53\%& +1.92\% \\
 DeRR \cite{DW09} & +1.58\% & +1.05\% & +0.99\% & +0.93\% \\
 DeRR+ & +0.79\% & +0.45\% & +0.63\% & +0.27\% 
\end{tabular}
}
\subtable{
\begin{tabular}{l||r|r|r|r}
  $10 \times 10$ & 10 & 25 & 50 & 75 \\ \hline
 Optimum & 2.60 & 3.70 & 5.86 & 7.94\\
 RR \cite{sriniround}   & +31.15\% & +39.73\% & +28.16\% & +23.93\% \\
 RR \cite{ichstacs06}   & +27.69\% & +32.70\% & +22.70\% & +18.64\% \\
 RR+   & +18.08\%& +24.05\% & +15.19\% & +14.11\% \\
 DeRR \cite{ichstacs06} & +16.92\%& +21.62\% & +14.51\%& +11.34\% \\
 DeRR \cite{DW09} & +8.85\% & +16.22\% & +10.58\% & +8.56\% \\
 DeRR+ & +6.15\% & +9.46\% & +6.14\% & +4.16\% 
\end{tabular}
}


\subtable{
\begin{tabular}{l||r|r|r|r}
  $15 \times 15$ & 10 & 25 & 50 & 75 \\ \hline
 Optimum & 2.19 & 2.99 & 4.29 & 5.71\\
 RR \cite{sriniround}   & +47.49\% & +55.52\% & +48.72\% & +40.28\% \\
 RR \cite{ichstacs06}   & +43.38\% & +47.49\% & +41.26\% & +36.43\% \\
 RR+   & +39.73\%& +36.12\% & +36.13\% & +26.09\% \\
 DeRR \cite{ichstacs06} & +26.48\%& +31.22\% & +27.27\%& +20.49\% \\
 DeRR \cite{DW09} & +19.18\% & +22.41\% & +22.14\% & +16.64\% \\
 DeRR+ & +17.81\% & +15.05\% & +16.78\% & +11.03\%
\end{tabular}
}
\caption{Congestions achieved as in Table~\ref{tab:congestion}, but with demands chosen independantly and uniformly at random from $r_i \in \{1, \dots, 5\}$. }
\label{tab:congestion_randdemand}
\end{table}


\begin{table}
\centering
\subtable{
\begin{tabular}{l||r|r|r|r}
 $5 \times 5$  & 10 & 25 & 50 & 75 \\ \hline
 IP (CPLEX)              & 0.0270 & 0.1076 &  0.343 & 0.776 \\
 LP (CPLEX)              & 0.0227 & 0.1078 &  0.301 & 0.697 \\
 Heur.                   & 0.0129 & 0.0380 &  0.065 & 0.096 \\
 DeRR \cite{DW09}        & 0.0009 & 0.0006 & 0.0015 & 0.0028\\
 DeRR \cite{ichstacs06}  & 0.0126 & 0.0270 & 0.0538 & 0.0589
\end{tabular}
}
\subtable{
\begin{tabular}{l||r|r|r|r}
 $10 \times 10$  & 10 & 25 & 50 & 75 \\ \hline
 IP (CPLEX)              & 0.3678 & 6.78 &  43.70 & 61.52 \\
 LP (CPLEX)              & 0.2349 & 4.71 &  32.03 & 34.02 \\
 Heur.                   & 0.6123 & 5.74 &  28.88 & 51.31 \\
 DeRR \cite{DW09}        & 0.0061 & 0.012 & 0.018 & 0.018\\
 DeRR \cite{ichstacs06}  & 0.1062 & 0.257 & 0.420 & 0.459
\end{tabular}
}
\subtable{
\begin{tabular}{l||r|r|r|r}
 $15 \times 15$  & 10 & 25 & 50 & 75 \\ \hline
 IP (CPLEX)              & 5.72 & 277.7 &  2057 & 7606 \\
 LP (CPLEX)              & 1.61 & 48.7 &  567 & 1135 \\
 Heur.                   & 9.42 & 131.6 &  819 & 2755 \\
 DeRR \cite{DW09}        & 0.026 & 0.044 & 0.063 & 0.07\\
 DeRR \cite{ichstacs06}  & 0.375 & 0.973 & 1.371 & 1.67
\end{tabular}
}
\caption{{Running times of the 5 algorithms in seconds. Given is the time for this particular step. For example, the running time of what is called ``DeRR+'' in Table~\ref{tab:congestion} is the sum of the values in lines ``LP'', ``DeRR'' and ``Heur.''.}}
\label{tab:times}
\end{table}


\subsection{A Heuristic Making Life Easier for Randomized Rounding}

As can be seen from the results presented so far, the different randomized rounding approaches usually find solutions that are not far from the optimum. We now propose and analyze a heuristic way to improve the performance. 

The rough idea is simple. Having solved the linear relaxation of the ILP, we know the optimal (relaxed) congestion $C^*$ that can be achieved. The congestion we end up with stems from this $C^*$ plus possible rounding errors inflicted in the congestion constraints (\ref{ilpcongestion}). It is clear that randomized rounding has a higher change to increase the congestion if there are many  congestion constraints satisfied with equality in the relaxation. 

Therefore, the heuristic we suggest is to resolve the LP with the following modifications. Let $\delta \in [0,C^*]$ be a parameter open for fine-tuning. We replace the congestion constraints (\ref{ilpcongestion}) by $\forall e \in E: \sum_{i = 1}^k x_{ie} \le C^*-\delta+z_e$, where $C^*$ is the (fixed) optimal congestion obtained from the first LP and $z_e \in [0,\delta]$ are new variables. The new objective is to minimize $\sum_{e \in E} z_e$. Since the $z_e$ are at most $\delta$, the flow given by a solution of this new LP also yields a congestion of at most $C^*$. However, the new objective punishes edges with total flow exceeding $C^*-\delta$. In consequence, the solution we obtain is also a solution for the original LP, but one that in addition tries to keep some room in the congestion constraints.

The experimental results are again presented in Table~\ref{tab:congestion}. Line~4 contains the results obtained by using randomized rounding as in~\cite{ichstacs06} after applying the heuristic and line~7 does so with the derandomization of~\cite{sriniround,DW09}. We did the same experiments with the other two rounding algorithms. Since the results were mainly inferior (to a similar extent as without the improvement), we omitted these numbers in the table. In all experiments, we chose $\delta=1$.

The results clearly show that using this heuristic can be worth the extra effort of solving a second LP. Apart from two instances with very small objective values 1.98 and 2.19, the heuristic always gains us a significant improvement. Surprisingly, these gains tend to be higher when using the derandomized rounding algorithm. 

It should be noted, though, that solving the second LP can be costly, as the numbers in Table~\ref{tab:times} indicate. 


\section{Maximum Coverage: From Greedy and Rounding to Hybrid Approaches}

Another problem where dependent rounding has found application
is the Maximum Coverage problem. In this problem, the input is a set $\{S_1, \dots, S_n\}$ of sets and a budget
bound~$L$.  The task is to select a set of~$L$ sets to maximize the size of their union.
Additionally, there can be costs~$c_i$ associated with the sets, and weights or profits~$w_i$ associated
with the elements.  In this case the task is to maximize the weighted sum of the covered elements, 
subject to the constraint that the total cost of the sets is at most~$L$.

For the unit-cost case (when all set costs are equal to one), a $(1-1/e)$-approximation can be produced easily,
either through a greedy algorithm \cite{GerardCornuejols04011977} or via randomized rounding~\cite{sriniround,DBLP:journals/jco/AgeevS04};
see below for details.
This is also the best polynomial-time approximation ratio possible unless P=NP~\cite{Feige285059}.
For the general case (with weighted budgets), essentially the same ratio is possible using both approaches,
but some care has to be taken to handle sets of high cost~\cite{Khuller199939,sriniround}.

In this section, we report on experiments performed on max-coverage instances of various
types and from different sources, comparing the behavior of the greedy and rounding-based algorithms.
In addition, based on experiences from these experiments, we describe two forms of greedy/LP-rounding hybrid
algorithms, and observe significantly improved solution qualities. 
We also consider a budget-preserving rounding, improving on that used in~\cite{sriniround}
(a similar improvement can be found in the so-called weighted dependent rounding of~\cite{gandhifocs02}).
The algorithms, and our improvements to them, are described in Section~\ref{sec:mcmethods}.

One important special case we consider is max-domination in unit disk graphs, corresponding to covering
points in the plane under Euclidean distance.  For this problem, we develop a PTAS (polynomial-time approximation scheme)
as a further point of comparison.  
The PTAS and more information on the problem setting are given in Section~\ref{sec:planar}.
Some of our instances for this setting come from a real-world facility location problem~\cite{brazil_url},
referenced by the OR Library web page~\cite{orlib}; as a source of more benchmark instances,
we also convert facility location benchmarks, gathered at~\cite{UflLib}, to a max-coverage setting.
See Section~\ref{sec:mcsetup} for more information on our experimental setup.
Thereafter, the rest of the section contains experimental outcomes and conclusions.

\subsection{The Algorithms} \label{sec:mcmethods}

The \emph{greedy algorithm} repeatedly selects a set fitting in the budget that maximizes the ratio of the profit of the newly
covered elements to the cost of the set.  
In the unit-cost case, this is a~$(1-1/e)$-approximation~\cite{GerardCornuejols04011977};
with general costs, further modifications are needed, but for the case that the budget is large compared to the cost of the
most expensive set, as will be the case in our experiments, it is still a good approximation~\cite{Khuller199939}.

For the \emph{rounding-based approach}, we phrase the problem as an ILP as follows (let~$n$ be the number of sets, and~$m$ the number of elements in the instance).
\begin{align} 
&\max \sum_{i=1}^m w_ix_i \\
\mathrm{s.t.}\,      &  \sum_{i=1}^n c_iy_i \leq L \label{eqn:budget}\\
  &\forall i \in [m]: x_i \leq \sum_{j: i \in S_j} y_j \\
  &\forall i \in [m]: x_i \in [0,1] \\
  &\forall i \in [n]: y_i \in \{0,1\} 
\end{align}
Let us first focus exclusively on the unit-cost case. Getting a \emph{randomized} approximation for this case is simple.
Solve the linear relaxation of the above formulation, and let~$(x^*,y^*)$ be an optimal solution, of value~$W^*$.
Applying randomized rounding to~$y^*$ with a cardinality constraint preserving the sum~$\sum_i y_i^*$
now on expectation produces a $(1-1/e)$-approximation, due to the negative correlation properties of the rounding~\cite{sriniround}.

The \emph{derandomization} works via the method of conditional expectation. 
Consider the expected outcome of independently rounding the variables~$y^*$ above:
\begin{equation}
F(y^*) = \sum_{i=1}^m w_i(1-\prod_{j: i \in S_j} (1-y_j^*)).
\label{eqn:expmaxcov}
\end{equation}
As shown by Ageev and Sviridenko~\cite{DBLP:journals/jco/AgeevS04}, we can use~$F(y)$ directly as a guide
for the derandomization, and produce a rounding~$y\in \{0,1\}^n$ of~$y^*$, such that $F(y) \geq F(y^*)$.
As~$F(y^*) \geq (1-1/e)W^*$, the rounding~$y$ will be a~$(1-1/e)$-approximation.

As an additional alternative, we introduce \emph{gradient-based rounding}.  Recall that cardinality-preserving randomized rounding
works by repeatedly considering pairs of non-integral variables and readjusting their values, maintaining the sum, such that one of them becomes integral
(see e.g.~\cite{DW09}).
By gradient-based rounding, we attempt to identify the best pair of variables to select for adjustment in each step.
To truly find this pair would require~$O(n^2)$ comparisons, each with cost~$O(m)$, but we can approximate the selection by considering the gradient of~$F(y)$.  
It is easy to show that if~$y_i$ and~$y_j$ are non-integral, and if~$\frac {\partial F(y)} {\partial y_i} \geq \frac {\partial F(y)} {\partial y_j}$, then
moving mass from~$y_j$ to~$y_i$ will keep the value of~$F(y)$ non-decreasing.
Thus we only need to compute and update the partial derivatives~$\frac {\partial F(y)} {\partial y_i}$, which can be done analytically at a cost of~$O(nm)$ per step,
and we can in every step pair off the variables with largest and smallest values of partial derivative.  
While the total complexity of the rounding process becomes~$O(n^2m)$, as opposed to~$O(nm)$ for standard derandomized rounding, 
the time requirement is small in practice (see Section~\ref{sec:mcgenexp}).

Returning to the issue of weighted (knapsack) budget constraints,
Srinivasan gives a rounding procedure (Lemma~3.1 in~\cite{sriniround}) that approximately 
preserves the value of a weighted sum of the rounded variables, while guaranteeing negative correlation properties as in the unit-cost case.
However, the way in this is achieved for many settings causes infeasible running times.
To solve the problem, we consider a budget-preserving rounding procedure, as follows.
A similar rounding is found in the weighted dependent rounding used in~\cite{gandhifocs02}.

\begin{theorem} \label{thm:roundbudget}
Let~$y \in [0,1]^n$ such that~$\sum_i c_iy_i=L$.
In polynomial time, one can compute a rounding~$\tilde y \in \{0,1\}^n$ such that
$\sum_i c_i\tilde y_i \leq L+\max_i c_i$ and~$F(\tilde y) \geq F(y)$.
\end{theorem}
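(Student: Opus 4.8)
The plan is to use a deterministic pipage-style rounding adapted to the weighted (knapsack) budget constraint, in the spirit of the weighted dependent rounding of~\cite{gandhifocs02} and the derandomization of~\cite{DBLP:journals/jco/AgeevS04}. Two structural facts about $F$ drive everything. First, $F$ is multilinear, i.e.\ affine in each coordinate separately, since each coverage term $1 - \prod_{k: m \in S_k}(1 - y_k)$ is a product over distinct variables. Second, $F$ is monotonically non-decreasing in every $y_\ell$, which is immediate from $\partial F/\partial y_\ell = \sum_{m: \ell \in S_m} w_m \prod_{k: m \in S_k, k \neq \ell}(1 - y_k) \geq 0$. (I assume positive costs; any set with $c_i = 0$ can be set to $1$ up front, which only increases $F$ for free.)

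First I would establish the key convexity property. Fixing all but two fractional coordinates $y_i, y_j$, the function is of the form $F = A\, y_i y_j + B\, y_i + C\, y_j + D$, and a short calculation gives $A = -\sum_{m: i,j \in S_m} w_m \prod_{k: m \in S_k,\, k \neq i,j}(1 - y_k) \leq 0$, because each pair $y_i, y_j$ enters a coverage term only through the factor $(1-y_i)(1-y_j)$, whose bilinear coefficient is non-positive. Now consider moving mass along the budget-preserving direction $y_i \mapsto y_i + s$, $y_j \mapsto y_j - (c_i/c_j)s$, which keeps $\sum_\ell c_\ell y_\ell$ constant. Substituting, $F$ becomes a quadratic in $s$ with leading coefficient $-A\, c_i/c_j \geq 0$, hence convex. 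Its maximum over the compact interval of $s$ that keeps $y_i, y_j \in [0,1]$ is therefore attained at an endpoint, and at such an endpoint at least one of $y_i, y_j$ becomes integral.

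This yields the main loop. While at least two coordinates are fractional, pick any two of them and move to whichever endpoint of the budget-preserving segment gives the larger value of $F$; by convexity this value is at least the current one, so $F$ does not decrease, the equality $\sum_i c_i y_i = L$ is preserved exactly, and the number of fractional coordinates strictly drops. After at most $n-1$ such steps at most one fractional coordinate $y_\ell$ remains, with $\sum_i c_i y_i = L$ still holding. Since $F$ is non-decreasing in $y_\ell$, I would round $y_\ell$ \emph{up} to $1$; this can only increase $F$, so $F(\tilde y) \geq F(y)$, while the cost rises by exactly $c_\ell(1 - y_\ell) < c_\ell \leq \max_i c_i$, giving $\sum_i c_i \tilde y_i < L + \max_i c_i$ as required. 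Each iteration evaluates $F$ at two candidate points in time $O(\sum_j |S_j|)$, and there are at most $n-1$ iterations, so the procedure runs in polynomial time.

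The main obstacle is the convexity computation, namely verifying $A \leq 0$; everything else is bookkeeping. The sign of $A$ is precisely what makes rounding toward the budget boundary safe, and it hinges entirely on the product structure of $F$. A secondary point to handle cleanly is the terminal single fractional variable: because the knapsack budget is a single linear equality, pipage cannot clear the last fractional coordinate without violating it, and it is the choice to round \emph{up} rather than down that both preserves $F(\tilde y) \geq F(y)$ (via monotonicity) and confines the budget overshoot to at most $\max_i c_i$.
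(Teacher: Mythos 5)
Your proof is correct and follows essentially the same route as the paper's: a pipage-style pair rounding that preserves $c_iy_i + c_jy_j$, combined with the observation that $F$ restricted to this budget-preserving line is convex (because the only nonlinear contribution comes from elements in $S_i \cap S_j$ and has nonnegative quadratic coefficient), so moving to an endpoint makes a variable integral without decreasing $F$, with the $\max_i c_i$ slack absorbing the one possibly remaining fractional variable. If anything, your write-up is more complete than the paper's, which leaves implicit both the monotonicity argument for rounding the final fractional variable up and the $c_i = 0$ edge case.
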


\begin{proof}
We refer to e.g.~\cite{DW09} or~\cite{DBLP:journals/jco/AgeevS04} for a description of the non-weighted, cardinality-preserving rounding procedure.
The only modification to this procedure is to replace the pair-rounding step, where variables~$y_i$ and~$y_j$ are adjusted so that one of them becomes integral.
Here, instead of keeping~$y_i+y_j$ constant, we instead maintain~$c_iy_i+c_jy_j$, that is, we change~$y_i$ and~$y_j$ to~$y_i-(\delta/c_i)$ and~$y_j+(\delta/c_j)$
for some appropriate~$\delta$. We will show that among any pair of adjustments in different directions, at least one keeps~$F(y)$ non-decreasing.

Let~$F(y)$ be defined as~(\ref{eqn:expmaxcov}), and let two non-integral variables~$y_i$ and~$y_j$ be selected,~$i<j$, 
and define
\[
g_{a}(\delta) = F(y_1, \dots, y_{i-1}, y_i+\delta, \dots, y_j-a\delta, \dots, y_n).
\]
We show that for any~$a>0$,~$g(\delta)$ is a convex function.  Indeed, the only terms of~$g(\delta)$ that are not constant or linear in~$\delta$
correspond to elements~$x \in S_i \cap S_j$:
\[
w_x (1-(1-y_i-\delta)(1-y_j+a\delta)\prod_{k: x \in S_k, k \notin \{i,j\}} (1-y_k)
\]
where the only non-linear term is~$c \cdot \delta^2$, for some constant~$c>0$.

Thus one of the values~$g(-\delta_0)$ and~$g(\delta_1)$, for~$\delta_i\geq 0$ are at least as large as~$g(0)=F(y)$,
and the rounding search as in~\cite{DBLP:journals/jco/AgeevS04} can be performed.

Finally, the~$\max_i c_i$ term occurs because we may end up with one single non-integral variable at the end of the process.
\end{proof}
As the first paragraph of this proof shows, to combine the gradient-based rounding with this rounding process
we only have to divide each component~$i$ of the gradient by the corresponding cost~$c_i$ before variable selection.

\subsection{A Unit Disk Maximum Domination PTAS} \label{sec:planar}

As a source of real-world instances, we consider a type of max-coverage instances derived from planar point set data.  
Given a set of points $P=\{p_1, \dots, p_n\}$ in the plane and a \emph{diameter} $d$, we define a graph (the unit disk intersection graph) by letting two points $p_i$, $p_j$ be connected if and only if the Euclidean distance between them is at most $d$.
In this graph, we consider the problem of max-domination, where selecting a vertex~$v$ covers~$v$ and all its neighbors. 
Interpreted as max-coverage, we thus get an instance where every vertex corresponds to one set and one element.
All sets will have unit cost; the elements may have weights, interpreted as the profit of covering them.

This problem is NP-hard, as follows from the hardness of Minimum Dominating Set in unit disk graphs \cite{Masuyama198157}.
However, it has good approximation properties---using tools known from the literature (e.g.~\cite{Glasser20051}), we are able to 
provide a polynomial-time approximation scheme (PTAS).  


The PTAS follows the grid-based shifting strategy of Arora~\cite{AroraPTAS}, as also applied to a related problem
on the placement of wireless base stations by Gla{\ss}er et al.~\cite{Glasser20051}.
We show the result for instances with unit costs and arbitrary profits, as this is closest to our instances,
and having both costs and profits arbitrary makes the problem as hard as max-coverage.

\begin{theorem} \label{th:ptas}
For any~$\ell>1$, the Max Domination problem on unit disk graphs with weighted vertices and unit costs
admits a~$(1-2/\ell)$-approximation in time~$n^{O(\ell^2)}$.
\end{theorem}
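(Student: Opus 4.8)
The plan is to follow the grid-shifting paradigm of Arora and of Gla{\ss}er et~al.~\cite{AroraPTAS,Glasser20051}. The idea is to impose an axis-parallel grid whose cells are large compared with the domination radius~$d$, to argue by a shifting/averaging argument that one may forbid a chosen center from being credited with vertices outside its own cell while losing only a $2/\ell$ fraction of the optimum, and then to solve the resulting \emph{decoupled} instance essentially optimally by a brute-force enumeration inside each cell combined with a knapsack-style dynamic program that distributes the budget~$L$ across cells. I call a vertex selection \emph{decoupled} (for a fixed grid) if a chosen center is credited only with covering vertices lying in the same cell. Since the true coverage of any decoupled solution is at least its decoupled coverage, it suffices to produce a good decoupled solution.

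First I would set up the grid with cell side length $2\ell d$ (the constant being flexible) and consider shifting the grid origin in each coordinate. For the analysis I use a uniformly random shift; for the algorithm it suffices to try the polynomially many combinatorially distinct grid placements. The heart of the argument is the shifting lemma. Let $W^*$ be the value of an optimal selection. For a fixed shift, a vertex $p$ covered by the optimum loses its credit in the decoupled instance only if every optimal center covering it lies in a different cell; as such a center is within distance $d$ of $p$ and is separated from $p$ by a grid line, this forces $\dist(p,\cdot)\le d$ to some grid line. Hence the weight lost by the optimum is at most the weight of the optimum-covered vertices within distance $d$ of a grid line. With side length $2\ell d$ the probability that a fixed vertex lies within distance $d$ of a vertical (resp.\ horizontal) grid line is $2d/(2\ell d)=1/\ell$, so by a union bound over the two directions each optimum-covered vertex is lost with probability at most $2/\ell$. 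Summing only over the optimum-covered vertices, the expected lost weight is at most $\frac{2}{\ell}W^*$, so for the best shift the decoupled optimum has value at least $(1-\frac{2}{\ell})W^*$.

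It remains to compute a decoupled optimum for each shift. I use two observations. \emph{Bounded budget per cell}: a cell of side $2\ell d$ can be partitioned into $O(\ell^2)$ sub-cells of diameter at most~$d$, and a single center placed in a sub-cell covers every vertex of that sub-cell, so $T=O(\ell^2)$ centers already cover all vertices of a cell; hence in an optimal decoupled solution no cell needs more than $T$ centers. Thus for each non-empty cell $X$ and each $t\in\{0,\dots,T\}$ I would compute $\mathrm{val}_X(t)$, the maximum weight of vertices of $X$ covered by $t$ centers chosen among the vertices in~$X$, by enumerating the $\binom{|X\cap P|}{t}\le n^{O(\ell^2)}$ candidate subsets. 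Finally, a standard knapsack dynamic program over the at most $n$ non-empty cells selects values $t_X\in\{0,\dots,T\}$ with $\sum_X t_X\le L$ maximizing $\sum_X \mathrm{val}_X(t_X)$, which gives the decoupled optimum for that shift. Taking the best shift yields a selection of value at least $(1-\frac{2}{\ell})W^*$, and the running time is dominated by the per-cell enumeration, i.e.\ $n^{O(\ell^2)}$.

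I expect the shifting accounting to be the part most in need of care: obtaining boundary loss exactly $2/\ell$ (rather than the naive $4/\ell$) hinges on choosing the cell size and buffer width correctly, on the fact that a lost vertex must lie within distance~$d$ of a grid line, and on charging the loss only against the optimum's covered weight $W^*$ rather than the total vertex weight. Once the decoupling and this lemma are in place, the $O(\ell^2)$ bound on the number of useful centers per cell and the knapsack combination of cells are routine, and the claimed bound $n^{O(\ell^2)}$ follows directly.
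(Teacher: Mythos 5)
Your proof is correct, and it follows the same overall paradigm as the paper's: a shifted grid, exhaustive enumeration of partial solutions inside each grid region (justified by an $O(\ell^2)$ bound on how many centers a region can usefully receive), and a knapsack-style dynamic program distributing the budget across regions. The differences lie in how boundary effects are handled, and they are genuine. The paper scales the disk diameter to one, works with unit cells, marks every $\ell$:th row and column, zeroes out the payoff of marked points, and solves framed subgrids of side $\ell+1$; its loss bound comes from discrete averaging over the $\ell$ horizontal and $\ell$ vertical shift values, giving $(1-1/\ell)^2 \geq 1-2/\ell$. You instead use cells of side $2\ell d$ with no buffer strips, and your \emph{decoupling} device (a center is credited only with vertices of its own cell) replaces the marking; your loss lemma is a union bound showing each optimum-covered vertex lies within distance $d$ of a grid line with probability at most $1/\ell$ per direction. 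The per-region budget bounds also differ in flavor but are both sound: you partition a cell into $O(\ell^2)$ sub-cells of diameter at most $d$, while the paper bounds the minimum dominating set by the independence number via an area argument. Your variant has one concrete advantage: since your cells partition the plane and every center belongs to exactly one cell, the decomposition of an optimal decoupled solution into per-cell solutions with total budget at most $L$ is exact \emph{by construction}. The paper's corresponding claim---that unmarked points in different framed subgrids must be covered by different centers---is delicate as stated, because adjacent framed subgrids share their marked boundary strips, and a center placed inside such a strip (width equal to the disk diameter) can cover unmarked points on both sides of it, so the per-subgrid budget accounting could count it twice; your decoupling sidesteps this issue entirely. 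The only step you should make fully explicit is the discretization of the continuous shift, but your remark that $O(n^2)$ combinatorially distinct grid placements suffice settles that routinely, and the claimed $n^{O(\ell^2)}$ running time follows.
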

\begin{proof}
Assume w.l.o.g. that the diameters of the disks is one,
and  divide the space into a regular grid with unit sides, 
so that every point resides in one grid box.  
Repeat the following for all values~$\ell_h,\ell_v \in \{0, \dots, \ell-1\}$.

Mark every~$\ell$:th column, starting from number~$\ell_h$, and 
every~$\ell$:th row, starting from number~$\ell_v$.  Any point
which is a member of a marked row or column gets payoff value~$0$.
Now for every subgrid of side~$l+1$, framed by marked rows and columns,
enumerate all optimal solutions as functions from the budget used to
the payoff achieved.  

Since our concern is the unit-cost case, the number of such payoffs is 
bounded by the cardinality of the solution which dominates all points,
i.e., by the size of a minimum dominating set.  This in turn is bounded
by the independence number, since a maximal independent set is dominating,
which is~$O(l^2)$, since each disk covers an area of~$O(1)$ and the subgrid
has total area~$O(l^2)$.  

Thus, we can in polynomial time create a vector of the~$O(l^2)$ different
payoff values that are received by investing a certain number~$t$ of sets
in the current subgrid; for each such value~$t$, the optimal solution
can be found in~$n^{t+O(1)}$ time by enumeration.  Thus in~$n^{O(l^2)}$ time
we can get the budget:payoff vectors for every framed subgrid.

Thereafter, we can collect these vectors into a kind of knapsack problem,
which can be solved by dynamic programming over subgrids and budget,
i.e.,~$f(i,t)$ being the best possible payoff for using~$t$ of the budget
to cover vertices in the first~$i$ subgrids.
Note that unmarked points in different framed subgrids must be covered by 
different points, and vice versa, an unmarked point can only be covered
by a point in its framed subgrid.  Thus a solution to the instance, after
the marking step has been performed, decomposes into one solution per
framed subgrid, and we are able to compute this optimally in polynomial 
time.

Finally, every point is marked at exactly one value of~$\ell_h$, 
thus for some shift value~$\ell_h$ at most a fraction~$1/\ell$ of the 
points are marked; by the same argument, for some value of~$\ell_v$,
at most a fraction~$1/\ell$ of the remaining points are marked.
This means that there is a pair of shift values~$\ell_h,\ell_v$ for which
the computed solution represents at least a fraction~$(1-1/\ell)^2\geq (1-2/\ell)$ of the optimal solution.
\end{proof}

%
%
%
%

In our implementation, we replace certain steps by an MIP solver, specifically 
the exhaustive enumeration phase for the subproblems, and the dynamic programming knapsack step.
With these modifications, execution of the algorithm for non-trivial approximation ratios becomes possible.
Unfortunately, we will see that even with these modifications, the PTAS approach is inferior to the greedy 
and rounding algorithms for realistic approximation settings.

\subsection{Experimental Setup} \label{sec:mcsetup}

We perform experiments using the greedy algorithm and different combinations of LP-relaxation and rounding procedure.

For the rounding, we use three variants.
Recall that the expected value of a single randomized rounding equals~$F(y^*)$, and can thus 
(unlike in Section~\ref{sec:routing}) be computed exactly. We consider three ways of boosting this value.
The first is to simply apply randomized rounding $1000$ times and pick the best result;
the second is derandomization, using Srinivasan-type rounding directly on~$F(y)$,
with arbitrary order of variable comparison.
The third is the gradient-based rounding of Section~\ref{sec:mcmethods}.

For problems with a weighted budget, we use the same three rounding methods, where the best-of-1000 and the derandomized roundings
use Lemma~3.1 of~\cite{sriniround}, and the gradient-based rounding uses Theorem~\ref{thm:roundbudget}.
Should a rounding exceed the budget constraint, we will greedily discard sets until the budget bound is reached.

We use CPLEX for both LP- and ILP-solving; all other algorithms were implemented by the authors in C.

Our instances are of two types.  For the unit disk max-domination problem, we use benchmark instances stemming from a real-world facility location problem, previously used in~\cite{brazil_use} and available at~\cite{brazil_url}.
For these instances, a demand is provided with every point; we use these demands as profit values.
To complement this, we use instances converted from facility location problems, in most cases downloaded from UflLib~\cite{UflLib}. 
In both cases, we convert the instances to max-coverage by selecting an appropriate distance threshold for membership.
In the case of the M* instances~\cite{mstar}, the distances were pre-scaled by demand values, making the distances inappropriate for our use; we remove this scaling, and use the demands as profit values instead.


\subsection{Experiments} \label{sec:mcgenexp}

\begin{table}[b]
\begin{center}
\begin{tabular}{c|c||c|c|c|c|c|c}
Instance & $n$ & LP & Random-1000 & Derand. & Grad. rounding & Greedy  & IP \\
\hline
br818-400-30 &818 & $0.36$s & $0.80$s & $0.11$s & $0.09$s & $0.33$s & 25s \\
kmed1-1k-37  & 1000 & $0.97$s & $1.46$s& $0.22$s & $0.25$s & $0.90$s & $>3600$s \\
MR1-060-16.5 & 500 & $0.36$s & $0.74$s& $0.05$s&  $0.04$s & $0.14$s & $>3600$s\\
\end{tabular}
\caption{Running times for various instances and algorithms; the times for the rounding methods exclude the time for solving the relaxation.}
\label{tab:mctimes}
\end{center}
\end{table}

We now report the results of our experiments. 

To begin with, we show the running times of the different methods on various instances in Table~\ref{tab:mctimes}.
Note that the LP solver once again uses a significant fraction of our running time,
and that performing many random roundings becomes more costly than derandomization, due to the need to evaluate the objective value for each solution.
The low numbers for the gradient-based rounding, as compared to the derandomization, can partly be explained by the gradient-based rounding being problem-specific,
while the derandomization uses general-purpose code.

\begin{table}[bt]
\begin{center}
\begin{tabular}{l|r|r||r|r r r r|r}
Name         & Size  &Budget& Greedy & LP: once           & 1000 &    derand  &     gradient      & Optimum \\
\hline
Chessboard   & $144$ &16    & $130$  &    $144$           & $144$&    $144$   &    $144$        & 144 \\ 
FPP ($k=17$) & $307$ &17    & $290$  &    $200$           & $210$&    $230$   &    $290$        & 290 \\ 

br818-400    & 818   & 30   &$28054$ &    $22157$        &$26199$&  $27397$   & $28448$         & 28709 \\ 

kmed1-1k      & 1000  & 37   & $948$  &    $709$           &$817$ &      $923$ & $962$           & 993--95 \\ 
MR1-060      & 500   &16500 &$1444$  &   $1179$           &$1254$&  $1402$    & $1445$          & 1462--94 \\

%
%
%
%
%
%
\end{tabular}
\end{center}
\caption{Experiments on single instances.  The data is averaged over 100 runs where the input data is randomly permuted.
The column ``LP: once'' shows the expected outcome of a single randomized rounding; the following three columns show our three rounding methods.  
The optimum gives the best upper and lower bound achieved by an IP solver after one hour of running time.
}
\label{tab:noweight}
\end{table}


Table~\ref{tab:noweight} shows results for some individual instances (described below).
The first we want to highlight are the Chessboard and Finite Projective Plane (FPP) instances.
These classes, proposed in~\cite{MYSTERY}, downloaded from \cite{UflLib},
are not intended as examples of realistic real-world problems, but rather serve to 
reveal the differences between the different approximation approaches. 
The Chessboard instances are instances on a chessboard with side~$3k$,
where the elements are the squares, and the sets are $3 \times 3$ subgrids (the set of squares reachable by a king within one step).
For us, these instances serve as a test for whether an algorithm discovers an optimal tiling. 
The FPP instances are graphs of a regular degree but more complicated structure.
Since all instances of these classes have equivalent combinatorial structure, we use only one instance per class in our experiments.
In both cases, the budget is set at the most difficult setting, which turns out to be just where the LP-relaxation can cover all or almost all elements.

The results immediately show the reason to pursue hybrid greedy/rounding algorithms.  
For the Chessboard instance, the LP-optimum is already integral, and thus every LP-rounding-based algorithm discovers an optimal solution.
On the other hand, these instances are difficult for the greedy algorithm, as a few early mistakes, when all sets seem equivalent, will hurt the end tiling.
In the FPP instances, however, we see the opposite effect. 
Here, upon inspection, we find that the LP-optimum is a useless mix of taking an equal, small amount of almost every variable,
leaving all the work of finding a good integral solution up to the rounding. 
On the other hand, the greedy algorithm performs very well here; runs with an ILP-solver show that it is at most one step away from the optimum.
We find that our proposed hybrid, the gradient-based rounding, produces consistent top-quality results in both test cases.

\begin{table}[b]
\begin{center}
\begin{tabular}{c|cr|cr|cr|c|cc}
Instance & \multicolumn{2}{|c|}{PTAS ($\ell=3)$} & \multicolumn{2}{|c|}{PTAS ($\ell=5$)} & \multicolumn{2}{|c|}{PTAS ($\ell=7$)} & Greedy & \multicolumn{2}{|c}{IP}    \\
         &       Value & Time                 &  Value & Time                      & Value & Time                  & Value  & Value & Time \\
\hline
br818-400-20 &   20742 & $1.3$s               & 22857  &  9s                       &  24129      &  69s            & 25247  & 26192 & $0.5$s\\
br818-400-25 &   23008 & $1.2$s               & 24683  &  9s                       &  25308      &  71s            & 26907  & 27670 & 9s\\
br818-400-30 &   23951 & $1.3$s               & 24909  & 10s                       &  27270      &  74s            & 28054  & 28709 & 25s\\
\end{tabular}
\end{center}
\caption{PTAS performance compared to the greedy algorithm and IP solver.}
\label{tab:ptas}
\end{table}

\begin{figure}[tb]
\begin{center}
\includegraphics{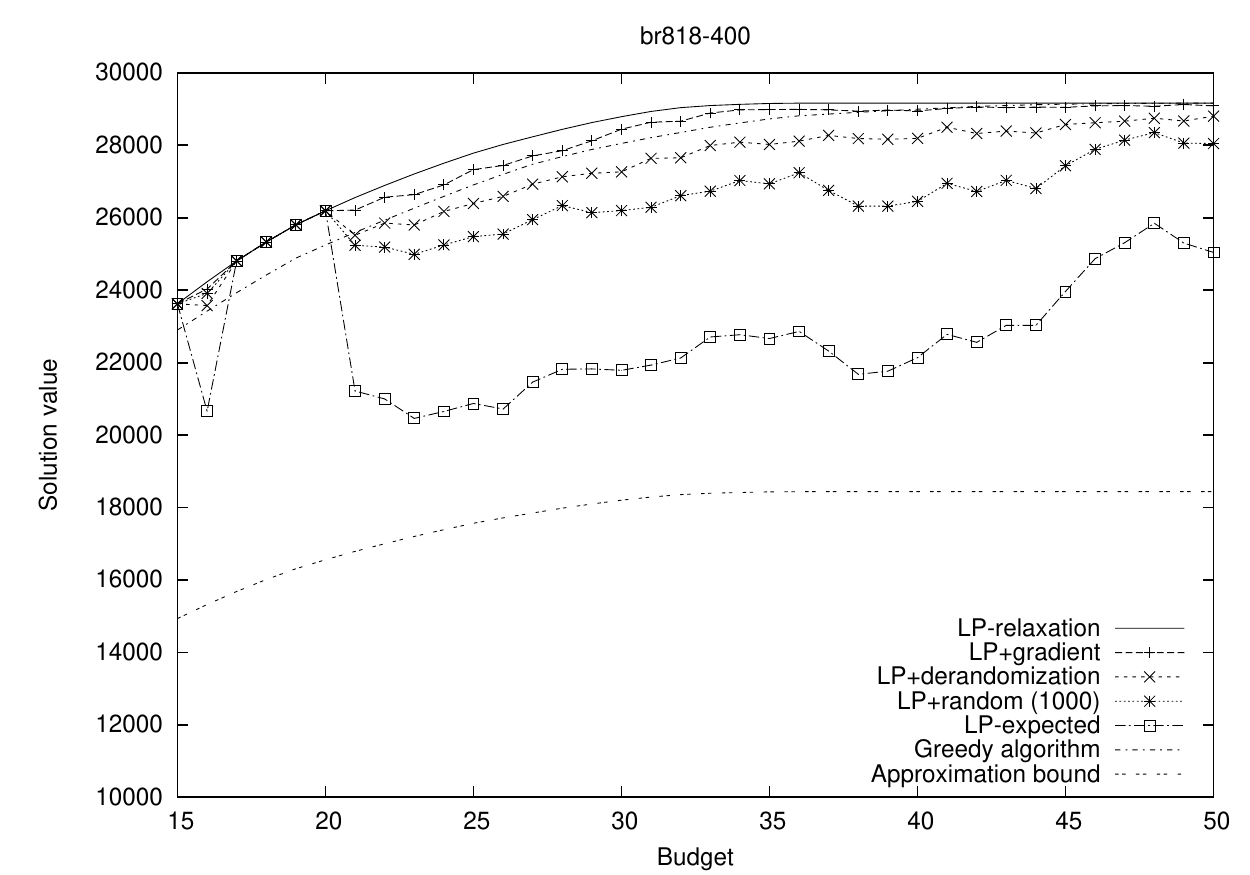}
\end{center}
\caption{Results for unit disk max-coverage instance br818-400. 
The plot shows the value of the LP-relaxation, the outcome of the three rounding methods, and the expected value of a single rounding against the greedy algorithm.  The approximation bound shows $(1-1/e)$ times the LP optimum.}
\label{fig:planar}
\end{figure}

\begin{figure}
\begin{center}
\includegraphics{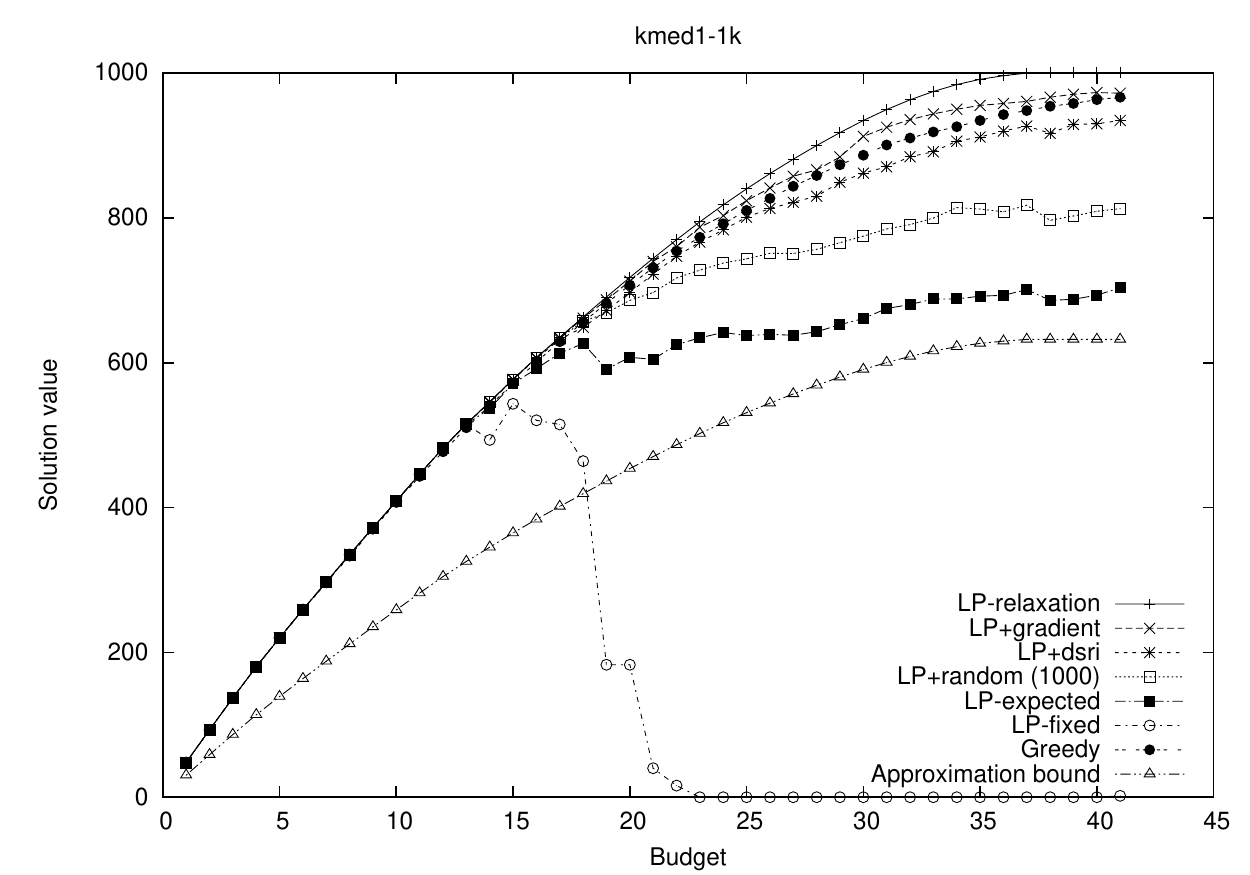}
\end{center}
\caption{Experimental outcome for instance kmed1 with distance threshold $1000$. 
The plot shows the value of the LP-relaxation, the outcome of the three rounding methods, and the expected value of a single rounding against the greedy algorithm.  The approximation bound shows $(1-1/e)$ times the LP optimum.}
\label{fig:kmed}
\end{figure}

\begin{figure}
\begin{center}
\includegraphics{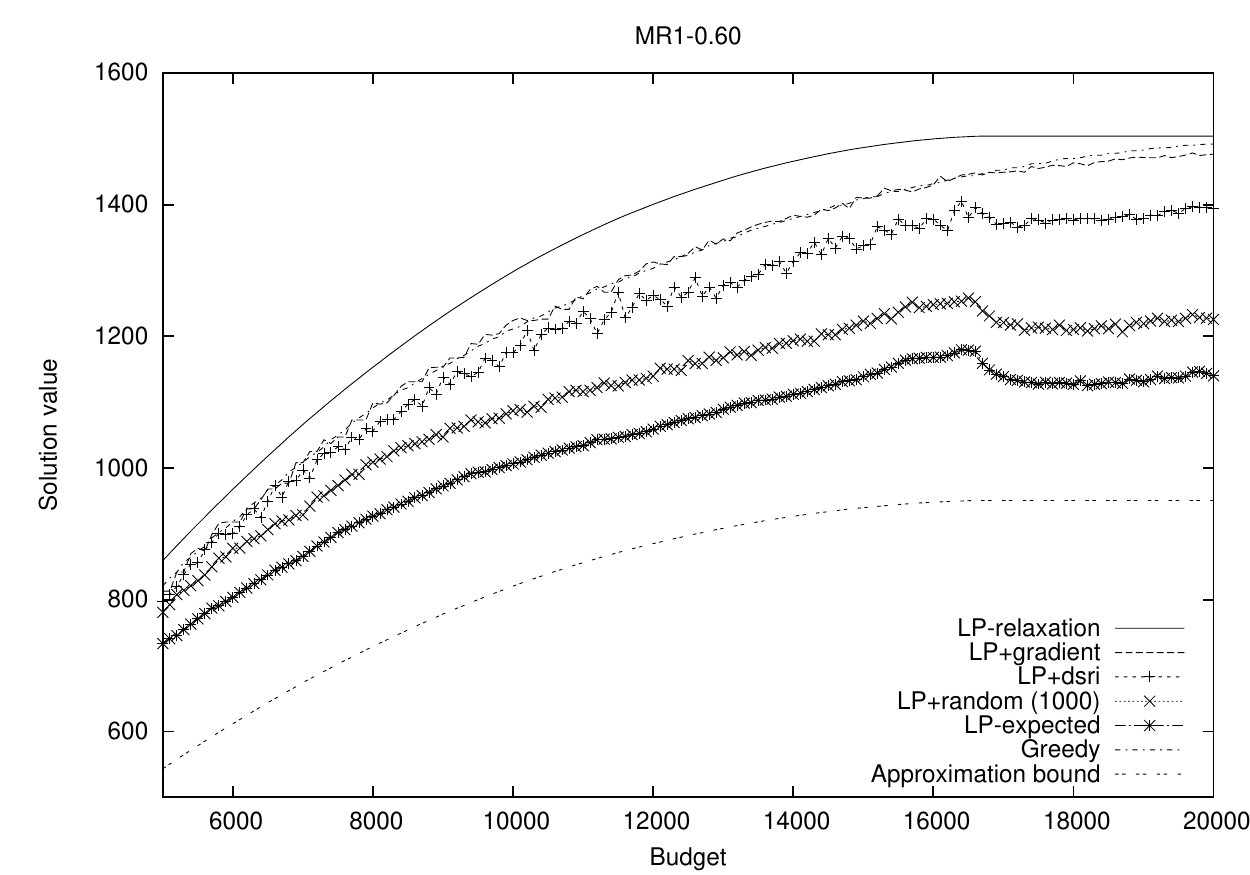}
\end{center}
\caption{Experimental outcome for instance MR1 with distance threshold $0.6$. 
The plot shows the value of the LP-relaxation, the outcome of the three rounding methods, and the expected value of a single rounding against the greedy algorithm.  The approximation bound shows $(1-1/e)$ times the LP optimum.}
\label{fig:budget}
\end{figure}

We now focus on the unit disk max-domination problem, with instances as described in Section~\ref{sec:planar}.
We select the largest instance, with~$818$ points inscribed in a box of sides~$6395$ by~$3975$, and use a distance threshold of~$400$.
This was chosen as a good balance, as too small or too large values (e.g., 100 resp.\ 800) creates too simple instances.
Figure~\ref{fig:planar} shows the behavior of the main algorithms (excluding the PTAS) for this instance, as depending on the budget.
Observe that the LP-rounding approach is very powerful for small budgets (up to 20), while further guidance is needed for larger budgets.  The gradient-based LP-rounding, providing just such guidance, produces top values throughout, frequently better than either the greedy or the standard rounding algorithms.
This instance also appears in Tables~\ref{tab:noweight} and~\ref{tab:mctimes} under the name br818-400 or br818-400-$L$, where~$L$ is the budget bound.
Another instance class of the same type is the \emph{k-median} instances~\cite{UflLib}.  Here we use the one named 1000-10, with a threshold of~$1000$, 
occurring in Tables~\ref{tab:noweight} and~\ref{tab:mctimes} as kmed1-1k or kmed1-1k-$L$, with Figure~\ref{fig:kmed} displaying the same data as Figure~\ref{fig:planar}.
In addition, Figure~\ref{fig:kmed} contains a plot of the fraction of the solution weight covered by integrally chosen sets (the line ``fixed'' in the figure).
The figure confirms that the LP-rounding approach is powerful for small budgets.

For concerns of clutter, the PTAS is not included in the figures, but its data is given separately in Table~\ref{tab:ptas}.
Note that for every feasible setting, the PTAS is both of lower quality and significantly slower than the alternatives.
The k-median-instance is omitted, since we lack point data for it.


Finally, we also consider an instance class with weighted budgets, namely the M* instances proposed in~\cite{mstar}, again downloaded from~\cite{UflLib}.
These are instances with uniformly random distances (i.e., random membership after conversion to max-coverage), but with facility costs (i.e., set costs)
chosen so that facilities close to many customers are more expensive.  The authors of~\cite{mstar} propose that such cost structures would arise in some real-world situations.
Table~\ref{tab:noweight} and Figure~\ref{fig:budget} give the results for instance MR1 with distance threshold~$0.6$, under the name MR1-060.
In general for this class, we found that the greedy algorithm and the gradient-based rounding method produce practically identical results, 
while the other methods are inferior to this.

\subsection{A Greedy/LP Hybrid.}

\begin{table}
\begin{center}
\begin{tabular}{l|r|r|r|r|r}
      & \multicolumn{5}{c}{Rounding Algorithm}\\
Relaxation & Integral part &  Expectation  &Random-1000 &  Derand & Gradient \\
\hline
LP           &$  4646$ & $21871$ &$ 26251$ &$ 27352$ &$ 28447$ \\
Hybrid $0.1$ &$ 12635$ & $24936$ &$ 27500$ &$ 27865$ &$ 28328$ \\
Hybrid $0.2$ &$ 22921$ & $27189$ &$ 28365$ &$ 28141$ &$ 28476$ \\
Hybrid $0.3$ &$ 28188$ & $28304$ &$ 28339$ &$ 28338$ &$ 28339$ \\
Hybrid $0.4$ &$ 28215$ & $28215$ &$ 28215$ &$ 28215$ &$ 28215$ \\
Hybrid $0.5$ &$ 27817$ & $26837$ &$ 28050$ &$ 28019$ &$	28082$ \\
Hybrid $0.6$ &$ 27970$ & $27530$ &$ 28048$ &$ 28048$ &$	28058$ \\
Hybrid $0.7$ &$ 28058$ & $28058$ &$ 28058$ &$ 28058$ &$ 28058$ \\
Hybrid $0.8$ &$ 28054$ & $28054$ &$ 28054$ &$ 28054$ &$ 28054$ \\
Hybrid $0.9$ &$ 28054$ & $28054$ &$ 28054$ &$ 28054$ &$ 28054$ \\
\end{tabular}
\caption{Results for combining greedy pre-selection with randomized rounding, averaged over 100 runs, complete results.
The data is for the instance br818-400-30, where the greedy algorithm alone produces value $28054$.}
\label{tab:oneinstancebrfull}
\end{center}
\end{table}

\begin{table}
\begin{center}
\begin{tabular}{l|r|r|r|r|r}
      & \multicolumn{5}{c}{Rounding Algorithm}\\
Relaxation & Integral part &  Expectation  &Random-1000 &  Derand & Gradient \\
\hline
LP           & $0.17$	& $700$  & $816$   & $	923$   	& $961$  \\
Hybrid $0.1$ & $137$	& $755$  & $857$   & $	940$   	& $970$  \\
Hybrid $0.2$ & $312$   	& $817$  & $904$   & $	957$    & $974$  \\
Hybrid $0.3$ & $757$   	& $925$  & $962$   & $	966$   	& $972$  \\
Hybrid $0.4$ & $797$   	& $934$  & $960$   & $	959$   	& $964$  \\
Hybrid $0.5$ & $915$   	& $947$  & $954$   & $ 	954$   	& $955$  \\
Hybrid $0.6$ & $929$   	& $948$  & $952$   & $	952$  	& $953$  \\
Hybrid $0.7$ & $931$   	& $947$  & $950$   & $	951$   	& $950$  \\
Hybrid $0.8$ & $947$	& $948$  & $948$   & $	948$   	& $949$  \\
Hybrid $0.9$ & $948$   	& $948$  & $948$   & $	948$   	& $948$  \\
\end{tabular}
\caption{Results for combining greedy pre-selection with randomized rounding, averaged over 100 runs, complete results.
The data is for the instance kmed1-1k-37, where the greedy algorithm alone produces value $948$.}
\label{tab:oneinstancekmfull}
\end{center}
\end{table}

\begin{table}
\begin{center}
\begin{tabular}{l|r|r|r|r|r}
      & \multicolumn{5}{c}{Rounding Algorithm}\\
Relaxation & Integral part &  Expectation  &Random-1000 &  Derand & Gradient \\
\hline
LP           &$     0$     & $ 1179$       &$ 1257 $    &$ 1381 $ &$  1446$ \\
Hybrid $0.1$ &$   369$     & $ 1232$       &$ 1301 $    &$ 1402 $ &$  1450$ \\
Hybrid $0.2$ &$   624$     & $ 1285$       &$ 1344 $    &$ 1421 $ &$  1443$ \\
Hybrid $0.3$ &$   821$     & $ 1316$       &$ 1371 $    &$ 1413 $ &$  1447$ \\
Hybrid $0.4$ &$   974$     & $ 1349$       &$ 1391 $    &$ 1432 $ &$  1448$ \\
Hybrid $0.5$ &$  1109$     & $ 1381$	   &$ 1413 $    &$ 1436	$ &$  1450$ \\
Hybrid $0.6$ &$  1226$     & $ 1408$	   &$ 1432 $    &$ 1438 $ &$  1441$ \\
Hybrid $0.7$ &$  1355$     & $ 1431$	   &$ 1443 $    &$ 1440 $ &$  1440$ \\
Hybrid $0.8$ &$  1440$     & $ 1446$	   &$ 1440 $    &$ 1445 $ &$  1440$ \\
Hybrid $0.9$ &$  1443$	   & $ 1443$	   &$ 1443 $    &$ 1442 $ &$  1443$ \\
\end{tabular}
\caption{Average value and standard deviation for all ways to relax-and-round one instance (MR1, distance $0.6$, budget~$16500$).
The pure greedy algorithm gives 1444. LP bound is 1503, optimum between $1462$ and $1494$.}
\label{tab:oneinstancemr}
\end{center}
\end{table}


Motivated by our results, we consider a different, more general form of greedy/LP hybrid than the gradient rounding.
Before we commence with the LP-rounding, we allocate some portion of the budget to greedy pre-selection,
and apply the LP-relaxation and rounding using the remaining budget to the thus reduced problem.
In Tables~\ref{tab:oneinstancebrfull}--\ref{tab:oneinstancemr} we examine the performance of such a hybrid on the 
br818-400-30, kmed-1k-37, resp.\ MR1-060-16.5 instances.  
We see that such a hybrid algorithm with a carefully chosen threshold can produce results superior to either the greedy or the rounding algorithm on their own.
The benefits for the gradient-rounding approach seem less consistent, although improvement is visible for the kmed1-1k instance.
We further report that with a pre-selection fraction of~$0.3$, both the Chessboard and the FPP instances of Table~\ref{tab:noweight} receive optimal solutions.

\end{document}